\documentclass[conference,a4paper]{IEEEtran}

\addtolength{\topmargin}{9mm}

\usepackage[utf8]{inputenc} 
\usepackage[T1]{fontenc}
\usepackage{url}              
\usepackage{cite}             
\usepackage{hyperref}
\usepackage[cmex10]{amsmath}  
\usepackage{cleveref}
\crefname{algocf}{alg.}{algs.}
\Crefname{algocf}{Algorithm}{Algorithms}
\interdisplaylinepenalty=1000 
\usepackage{mleftright}       
\mleftright                   

\usepackage{graphicx}         
\usepackage{booktabs}         

\usepackage[mode=buildnew]{standalone}

\usepackage[eulergreek]{sansmath}
\usepackage{pgfplots}
\usepackage{standalone}

\usepackage{verbatim}

\pgfplotsset{compat=1.16}
\pgfplotsset{
  tick label style={font=\footnotesize\sansmath\sffamily},
  every axis plot/.style={mark=none,line width=1pt},
}

\pgfplotsset{select coords between index/.style 2 args={
    x filter/.code={
        \ifnum\coordindex<#1\def\pgfmathresult{}\fi
        \ifnum\coordindex>#2\def\pgfmathresult{}\fi
    }
}}

\hyphenation{op-tical net-works semi-conduc-tor}

\usepackage{amssymb}
\usepackage{amsthm}
\usepackage{stmaryrd}
\usepackage{xcolor}
\usepackage[ruled, vlined]{algorithm2e}
\usepackage[makeroom]{cancel}

\newtheorem{theorem}{Theorem}[section]

\newtheorem{lemma}{Lemma}[section]


\usepackage{amsmath,amsfonts,bm}









\def\eqref#1{equation~\ref{#1}}









\def\1{\bm{1}}










\DeclareMathAlphabet{\mathsfit}{\encodingdefault}{\sfdefault}{m}{sl}
\SetMathAlphabet{\mathsfit}{bold}{\encodingdefault}{\sfdefault}{bx}{n}











\newcommand{\KL}{D_{\mathrm{KL}}}



\DeclareMathOperator*{\argmin}{arg\,min}

\usepackage{mathtools}

\newcommand{\Nats}{\mathbb{N}}
\newcommand{\Ints}{\mathbb{Z}}
\newcommand{\Reals}{\mathbb{R}}
\newcommand{\Oh}{\mathcal{O}}

\newcommand{\Exp}{\mathbb{E}}
\newcommand{\Prob}{\mathbb{P}}

\newcommand{\Normal}{\mathcal{N}}

\newcommand{\Unif}{\mathcal{U}}

\newcommand{\Ind}{\mathbf{1}}
\newcommand{\defeq}{\stackrel{\mathit{def}}{=}}
\newcommand{\logtwo}{\log_2}
\newcommand{\loge}{\ln}
\newcommand{\BP}{\mathcal{BP}}
\newcommand{\lvlset}{\mathcal{H}}

\DeclarePairedDelimiterX{\infdivx}[2]{[}{]}{%
  #1\delimsize\| #2%
}
\DeclarePairedDelimiterX{\infdivxcolon}[2]{[}{]}{%
  #1\delimsize: #2%
}
\newcommand{\KLD}{\KL\infdivx}
\newcommand{\infD}{D_{\infty}\infdivx}

\DeclarePairedDelimiter{\norm}{\lVert}{\rVert}
\DeclarePairedDelimiter{\abs}{\lvert}{\rvert}
\DeclarePairedDelimiter{\clipped}{\llbracket}{\rrbracket}
\DeclarePairedDelimiterX{\innerProd}[2]{\langle}{\rangle}{%
    #1,#2%
}
\DeclareMathOperator*{\esssup}{ess\,sup}

\newcommand{\TargetFamily}{\mathcal{Q}}
\newcommand{\Entropy}{\mathbb{H}}

\newcommand{\disteq}{\stackrel{d}{=}}
\newcommand{\B}{\mathcal{B}}

\DeclareMathOperator\supp{supp}

\usetikzlibrary{arrows.meta, calc, positioning}

\tikzset{every picture/.style={/utils/exec={\sffamily}}}

\tikzstyle{arrow} = [thick,-latex]
\tikzstyle{message} = [rectangle, anchor=south west, draw=black!100, fill=black!10, align=left]

\newcommand{\appendixref}[1]{#1}

\setlength{\textfloatsep}{0pt}

\begin{document}

\title{Adaptive Greedy Rejection Sampling} 

\author{%
  \IEEEauthorblockN{Gergely Flamich}
  \IEEEauthorblockA{University of Cambridge \\
  Cambridge, UK \\
  gf332@cam.ac.uk}
  \and
  \IEEEauthorblockN{Lucas Theis}
  \IEEEauthorblockA{Google Research \\
  London, UK \\
  theis@google.com}
}

\maketitle

\begin{abstract}
We consider channel simulation protocols between two communicating parties, Alice and Bob. 
First, Alice receives a target distribution $Q$, unknown to Bob. 
Then, she employs a shared coding distribution $P$ to send the minimum amount of information to Bob so that he can simulate a single sample $X \sim Q$. 
For discrete distributions, Harsha et al. \cite{harsha2009communication} developed a well-known channel simulation protocol -- greedy rejection sampling (GRS) -- with a bound of ${\KLD{Q}{P} + 2\loge(\KLD{Q}{P} + 1) + \Oh(1)}$ on the expected codelength of the protocol. 
In this paper, we extend the definition of GRS to general probability spaces and allow it to adapt its proposal distribution after each step.
We call this new procedure Adaptive GRS (AGRS) and prove its correctness.
Furthermore, we prove the surprising result that the expected runtime of GRS is exactly $\exp(\infD{Q}{P})$, where $\infD{Q}{P}$ denotes the R\'enyi $\infty$-divergence. 
We then apply AGRS to Gaussian channel simulation problems. 
We show that the expected runtime of GRS is infinite when averaged over target distributions and propose a solution that trades off a slight increase in the coding cost for a finite runtime. 
Finally, we describe a specific instance of AGRS for 1D Gaussian channels inspired by hybrid coding \cite{theis2022algorithms}.
We conjecture and demonstrate empirically that the runtime of AGRS is $\Oh(\KLD{Q}{P})$ in this case.
\end{abstract}

\section{Introduction}
\noindent
Channel simulation is a two-party communication problem between Alice and Bob \cite{harsha2009communication}\cite{bennett2002}\cite{cuff2008}.
Before communication, they share a coding distribution $P$ and are assumed to have access to an infinite sequence of publicly available fair coin flips. 
During one round of communication, Alice receives a target distribution $Q$ and sends a number of bits to Bob so that he can simulate a single sample $X \sim Q$. We want the number of bits to be as small as possible. In the following, we will also be interested in the computational cost of generating the bits.
\par
\textit{One-shot channel simulation} (OSCS) recently began garnering interest outside of information theory, especially in the learned compression and differential privacy communities \cite{havasi2019minimal}\cite{flamich2020compressing}\cite{agustsson2020universally}\cite{shah2021optimal}\cite{flamich2022fast}\cite{theis2022lossy}. 
Many of the central problems in these fields, such as compressing data using probabilistic models \cite{flamich2020compressing}\cite{agustsson2020universally}\cite{theis2022lossy}, or compressing the models themselves \cite{havasi2019minimal}, can be cast as instances of OSCS.
However, the instances of OSCS that arise in these situations involve very high-dimensional distributions and can have high information content as measured by $\KLD{Q}{P}$.
Hence, it is crucial to understand the computational complexity of channel simulation protocols' encoding and decoding processes, not just their communication efficiency. 
Assuming RP $\neq$ NP, Agustsson \& Theis \cite{agustsson2020universally} showed that distributions exist for which the computational cost of encoding a sample grows at least as $\exp(\KLD{Q}{P})$. 
Yet, uniform distributions and one-dimensional Gaussian distributions can be simulated in $\Oh(\infD{Q}{P})$ or less  \cite{agustsson2020universally}\cite{flamich2022fast}.
\begin{algorithm}[t]
\SetAlgoLined
\DontPrintSemicolon
\SetKwInOut{Input}{Input}
\SetKwInOut{Output}{Output}
\SetKwFunction{clip}{clip}
\Input{Target $Q$, proposal $P$, bounds $(B_1, B_2, \dots)$}
$L_0, S_1 \gets (0, 1)$ \;
\For{$k = 1$ \KwTo $\infty$}{
$X_k \sim P\vert_{B_k}$ \;
$U_k \sim \Unif[0, 1]$ \;
$\beta_k \gets \clipped*{P(B_k) \cdot \left(\frac{dQ}{dP}(X_k) - L_{k - 1} \right) \middle / S_k}$ \;
 \If{$U_k \leq \beta_k$}{
    \KwRet{$X_k, k$}
 }
 $L_k \gets L_{k - 1} + S_k / P(B_k)$ \;
 $\lvlset_k \gets \left\{ y \in \Omega \mid L_k \leq \frac{dQ}{dP}(y) \right\}$ \;
 $S_{k + 1} \gets Q(\lvlset_k) - L_k \cdot P(\lvlset_k)$
}
\caption{Adaptive greedy rejection sampling. When $B_k = \Omega$ for every $k$, the algorithm is just greedy rejection sampling \cite{harsha2009communication}. See \Cref{sec:background} for notation.}
\label{alg:adaptive_grs}
\end{algorithm}
\par
\textit{Contributions.}
In this paper, we analyze greedy rejection sampling (GRS) \cite{harsha2009communication} and its extensions, with a particular focus on computational complexity. 
Our contributions are as follows:
\begin{itemize}
    \item We extend the formulation of GRS from discrete to arbitrary probability spaces and allow it to adapt to its target distribution.
    We call the generalized algorithm \textit{adaptive greedy rejection sampling} (AGRS) and use GRS to refer specifically to its non-adaptive version.
    We prove that AGRS retains the same correctness and codelength guarantees as Harsha et al.'s original algorithm \cite{harsha2009communication}.
    \item We prove that the expected runtime of GRS for a target $Q$ and proposal $P$ is exactly $\exp(\infD{Q}{P})$. Surprisingly, this matches the runtime of regular rejection sampling.
    \item We then consider a class of Gaussian channel simulation problems and show that the expected runtime of GRS for this class is infinite. To remedy this, we propose using an overdispersed proposal distribution, which trades off a slight increase in the coding cost for a finite, though still exponentially scaling, runtime. 
    \item Finally, we consider a particular instance of AGRS inspired by hybrid coding to simulate one-dimensional Gaussian channels. 
    We conjecture and demonstrate numerically that our proposed algorithm is correct and that it scales linearly in $\KLD{Q}{P}$ in the Gaussian case.
\end{itemize}

\section{Background}
\label{sec:background}
\noindent
\textit{Notation:} We denote the base 2 logarithm by $\logtwo$ and the natural logarithm by $\loge$.
We define $[a:b] \defeq [a, b] \cap \Ints$ for $a, b \in \Ints$, and $[n] \defeq [1:n]$ for $n \in \Nats$.
For $H \subseteq \Reals$ and $c \in \Reals$, define $c + H \defeq \{x + c \mid x \in H\}$.
Let $\clipped{\cdot}$ be the $[0, 1]$-clipping function: $\clipped{x} \defeq \min\{\max\{x, 0\}, 1\}$. 
Let $Q$ and $P$ be Borel probability measures over some Polish space $\Omega$ such that ${Q \ll P}$ with Radon-Nikodym derivative $dQ/dP$.
For a discrete random variable $K$, let $\Entropy[K]$ denote its Shannon entropy measured in nats.
Define ${\KLD{Q}{P} \defeq \int \loge \frac{dQ}{dP}(x) \, dQ(x)}$ and $\infD{Q}{P} \defeq \esssup_{x \in \Omega}\left\{\loge\frac{dQ}{dP}(x)\right\}$, where the essential supremum is taken with respect to $P$.
Throughout this paper we will assume that $\KLD{Q}{P} < \infty$.
For a positive measure $\mu$ over a measure space $\Omega$, define its total variation norm as $\norm{\mu}_{TV} \defeq \mu(\Omega)$.
For Borel sets $A$ and $B$, denote the measure of $A$ under $P$ restricted to $B$ as $P\vert_B(A) \defeq P(B)^{-1}\cdot\int_A \Ind[x \in B] \, dP(x)$, where $\Ind[\cdot]$ is the indicator function.
Let $\Unif(A)$ denote the uniform distribution over a set $A$. For example, \ $\Unif([0, 1))$ is uniform on the unit interval and $\Unif([n])$ is uniform over the first $n$ natural numbers.
Let $\Normal(\mu, \Sigma)$ be the $d$-dimensional Gaussian measure with mean $\mu$ and covariance $\Sigma$ and, as a slight abuse of notation, let $\Normal(x \mid \mu, \Sigma)$ denote its density with respect to the standard Lebesgue measure evaluated at $x$.
\par
\textit{Channel Simulation:}
Let Alice and Bob be communicating parties who share some coding distribution $P$ and have access to an infinite sequence of publicly available fair coin flips. 
Formally, the coin flips are a Bernoulli process $\BP = (b_1, b_2, \hdots)$ with mean $1/2$. 
In \textit{one-shot channel simulation} (OSCS), Alice receives a target distribution Q and needs to send the minimum information to Bob so that he can simulate a \textbf{single sample} $X \sim Q$.
In particular, Bob is not interested in learning $Q$.
If we make no further assumptions about $Q$ and $P$, this problem is also known as \textit{relative entropy coding} (REC) \cite{flamich2020compressing}\cite{flamich2022fast}.
On the other hand, suppose we can identify the target family with a class of conditional distributions $X \mid Z \sim Q_Z$ and have $Z \sim \pi$. In that case, this problem is also referred to as \textit{reverse channel coding} (RCC) \cite{agustsson2020universally}\cite{theis2022algorithms}.
Since $\pi$ defines a distribution over problem instances, we can perform an average-case analysis for RCC.
We use the term OSCS to refer to both REC and RCC.
We first obtain results for the REC setting, after which we analyze Gaussian RCC.
\par
\textit{Greedy Rejection Sampling (GRS):} 
GRS for discrete spaces was proposed by Harsha et al. \cite{harsha2009communication} and
is so-called because it accepts proposed samples as early as possible. 
That is, in each step $k$ it accepts candidate $X_k \sim P$ with the largest probability possible while still maintaining correctness. 
In other words, it greedily minimizes the survival probability ${S_k = \Prob[K \geq k]}$, where $K$ is the random variable corresponding to the step in which the sampler terminates.
GRS maintains an increasing sequence of levels $L_k$, which in turn define superlevel sets of the density ratio $\lvlset_k \defeq \left\{ y \in \Omega \mid L_k \leq \frac{dQ}{dP}(y) \right\}$.
Then, GRS accepts proposed samples $X_k$ that fall into $\lvlset_{k - 1}$ with high probability, and rejects all samples that fall outside.
\par
\textit{Channel Simulation with GRS:}
Alice can use GRS to solve the OSCS problem by using $\BP$ to simulate her proposed samples. 
She then sends the index $K$ of her accepted sample to Bob. 
Given $K$, Bob recovers $X_K$ by simulating the same samples Alice did using $\BP$, and accepting the $K$th one.
In this paper, we generalize GRS in two ways: we extend it to arbitrary probability spaces and allow the proposal distributions to adapt to the target. 
We develop all our results for the generalized algorithm, after which we specialize them to GRS.

\section{Adaptive Greedy Rejection Sampling}
\noindent
In this section, we propose \textit{adaptive greedy rejection sampling} (AGRS) which allows the proposal distribution to change between iterations. 
The key observation is that from step $k$ onwards, GRS will never accept a sample outside of the superlevel set $\lvlset_{k - 1}$; hence we can adapt the proposal $P$ by truncating it in such a way that it still contains $\lvlset_{k - 1}$.
A similar observation is exploited by the hybrid coding scheme proposed by Theis \& Yosri \cite{theis2022algorithms}. 
However, hybrid coding only adapts to the target $Q$ once and is only applicable if $Q$ has bounded support. 
On the other hand, our scheme can adapt in every step and does not require $Q$ to have bounded support.
Let $B_1 \supseteq B_2 \supseteq \hdots$ be such that for all $k \in \Nats$ we have $\lvlset_{k - 1} \subseteq B_k$.
We refer to such a chain as the \textit{bounds used by the sampler}.
\Cref{alg:adaptive_grs} has pseudocode for AGRS.
By setting $B_k = \Omega$ for all $k$, we recover regular GRS.
To realize the bounds, Bob requires additional information from Alice. For certain shapes this information can be efficiently encoded with dithered quantization.
In \Cref{sec:gaussian_channel_simulation}, we elaborate on this approach on the example of 1D Gaussians.
\par
Let $K$ now denote the random variable corresponding to the termination step of AGRS, and $X_K$ the sample it outputs. 
Then, our first result is that AGRS is correct, as it terminates with probability $1$ and its output follows the target distribution.
\begin{theorem}
\label{thm:adaptive_grs_correctness}
$\Prob[K < \infty] = 1 \quad \text{and} \quad X_K \sim Q$.
\end{theorem}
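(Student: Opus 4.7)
The plan is to compute the joint distribution of the terminating index $K$ and the output sample $X_K$ explicitly, then reduce both claims to a single $P$-almost-sure inequality. Throughout, write $r(y) = \frac{dQ}{dP}(y)$, $L_\infty = \lim_k L_k \in (0, \infty]$ (the limit exists by monotonicity of the $L_k$), and $\lvlset_\infty = \{y \in \Omega : r(y) \geq L_\infty\}$.

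First I would unroll one iteration. Conditioned on $K \geq k$, the candidate $X_k$ has law $P|_{B_k}$ and is accepted with probability $\beta_k(X_k)$. Using $L_k - L_{k-1} = S_k/P(B_k)$ together with the hypothesis $\lvlset_{k-1} \subseteq B_k$, the clipping in $\beta_k$ unwinds to
\[
\Prob[X_k \in A,\, K = k] = \int_{A \cap \lvlset_{k-1}} \min\{r(y) - L_{k-1},\, L_k - L_{k-1}\}\, dP(y)
\]
for every Borel $A$. Specializing to $A = \Omega$ and expanding $S_{k+1} = S_k - \Prob[K = k]$ verifies inductively the invariant $S_{k+1} = Q(\lvlset_k) - L_k P(\lvlset_k)$ that the algorithm maintains. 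Summing the step-wise formula in $k$ telescopes pointwise in $y$ to $\min\{r(y), L_\infty\}$, so
\[
\Prob[X_K \in A] = \int_A \min\{r(y),\, L_\infty\}\, dP(y).
\]
Both conclusions of the theorem then reduce to the single statement $r \leq L_\infty$ $P$-almost surely: it forces the integrand to equal $r$ (giving $X_K \sim Q$), and, through the invariant, it drives $S_k \to 0$ (giving $\Prob[K < \infty] = 1$).

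The main obstacle is showing $P(\{r > L_\infty\}) = 0$. I would split on the value of $P(\lvlset_\infty)$. If $P(\lvlset_\infty) > 0$, every $y \in \lvlset_\infty \subseteq \lvlset_{k-1} \subseteq B_k$ already satisfies $r(y) \geq L_\infty \geq L_k$, so the clip saturates at $\beta_k(y) = 1$ and the per-step acceptance probability is at least $P(\lvlset_\infty)/P(B_k) \geq P(\lvlset_\infty)$. This yields the geometric bound $S_k \leq (1 - P(\lvlset_\infty))^{k-1}$, so $S_k \to 0$; passing to the limit in the invariant by continuity from above of $Q$ and $P$ on the decreasing chain $\lvlset_k \downarrow \lvlset_\infty$ gives $Q(\lvlset_\infty) = L_\infty P(\lvlset_\infty)$. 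Since $r \geq L_\infty$ on $\lvlset_\infty$, equality in $\int_{\lvlset_\infty} r\, dP = L_\infty P(\lvlset_\infty)$ forces $r = L_\infty$ $P$-almost everywhere on $\lvlset_\infty$, so $\{r > L_\infty\}$ is $P$-null. If instead $P(\lvlset_\infty) = 0$ (which covers $L_\infty = \infty$, since $r < \infty$ $P$-a.s.), then $Q \ll P$ gives $Q(\lvlset_\infty) = 0$, the bound $L_k P(\lvlset_k) \leq Q(\lvlset_k) \to 0$ drives $S_k \to 0$ directly through the invariant, and $\{r > L_\infty\} \subseteq \lvlset_\infty$ is automatically $P$-null. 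In both cases the desired a.s.\ bound holds, and both claims of the theorem follow.
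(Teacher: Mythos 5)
Your proof is correct and reaches the result by a genuinely different organization than the paper's, even though the underlying telescoping computation is the same: both proofs effectively establish $\Prob[K \leq k, X_K \in A] = \int_A \min\{r, L_k\}\,dP$. Where you diverge is that you pass directly to the limit $k \to \infty$ and obtain the explicit output law $\Prob[X_K \in A] = \int_A \min\{r, L_\infty\}\,dP$, reducing both assertions of the theorem to a single $P$-a.s.\ inequality $r \leq L_\infty$, proved by a case split on whether $P(\lvlset_\infty) > 0$. The paper instead keeps everything at finite $k$: it introduces the positive defect measure $\Delta_k(A) = Q(A) - \Prob[K \leq k, X_K \in A]$, proves a quantitative survival bound $S_{k+1} \leq \exp\bigl(-\sum_{n=1}^k P(\lvlset_n)/P(B_n)\bigr)$ as a standalone lemma, and closes by showing $\norm{\Delta_k}_{TV} \to 0$ via a case split on the behaviour of the sequence $P(\lvlset_k)/P(B_k)$. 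Your route buys a cleaner reduction (the output distribution in closed form, and a dichotomy on $P(\lvlset_\infty)$ that sidesteps the delicacy of arguing a nonnegative sequence failing to converge to $0$ is bounded away from $0$); the paper's route buys the exponential decay rate, a sharper reusable fact than your geometric bound. One small point worth making explicit in your Case~1: before passing to the limit in $L_k P(\lvlset_k)$ you should note $L_\infty < \infty$, which follows since $L_k P(\lvlset_\infty) \leq L_k P(\lvlset_k) \leq Q(\lvlset_k) \leq 1$ and $P(\lvlset_\infty) > 0$.
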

\begin{proof}
We first prove the following lemmas. Let $r = \frac{dQ}{dP}$.
\begin{lemma}
\label{lemma:survival_prob}
Let $S_k$ be as defined by \Cref{alg:adaptive_grs}. Then $S_k = \Prob[K \geq k]$.
\end{lemma}
\begin{proof}
We prove the statement via induction.
The base case $k = 1$ holds trivially, since $\Prob[K \geq 1] = 1 = S_1$ by definition.
Now, assume the statement holds for $k \leq n$. We need to show the statement holds for $k = n + 1$.
By definition, $\Prob[K = n \mid K \geq n, X_n = x_n] = \beta_n(x_n)$.
Multiplying both sides by $S_n = \Prob[K \geq n]$, taking expectations with respect to $X_n$ and noting that $\Prob[K = n, K \geq n] = \Prob[K = n]$, we find that
\begin{align}
    \Prob[K = n] 
    &= \int_{B_n} \beta_n(x) \frac{S_n}{P(B_n)} \, dP(x) \nonumber \\
    &= \int_{B_n} \Ind[L_{n - 1} < r(x) < L_n](r(x) - L_{n - 1}) \, dP(x) \nonumber \\
    &+ \int_{B_n}\Ind[L_n < r(x)]\frac{S_n}{P(B_n)} \, dP(x) \nonumber
\end{align}
\vspace{-0.5cm}
\begin{align}
    &= Q(\lvlset_{n-1} \setminus \lvlset_n) - L_{n - 1}P(\lvlset_{n-1} \setminus \lvlset_n) + \frac{S_n P(\lvlset_n)}{P(B_n)} \nonumber \\
    &=Q(\lvlset_{n-1}) - Q(\lvlset_n) - L_{n - 1}P(\lvlset_{n - 1}) + L_n P(\lvlset_n) \label{eq:accept_prob_telescopic} \\
    &= S_{n} - S_{n + 1}
    = P(K \geq n + 1) + P(K = n) - S_{n + 1}, \nonumber
\end{align}
where the third equality holds because $\lvlset_{n - 1} \subseteq B_n$.
Hence, $S_{n + 1} = \Prob[K \geq n + 1]$, finishing the proof.
\end{proof}
\begin{lemma}
Let $\Delta_k(A) \defeq Q(A) - \Prob[K \leq k, X_K \in A]$.
Then, $\Delta_k$ is a positive measure and $\lim_{k \to \infty}\norm{\Delta_k}_{TV} = 0$ implies \Cref{thm:adaptive_grs_correctness}.
\end{lemma}
\begin{proof}
Let $A$ be a Borel set.
Then, 
\begin{align}
    \Prob[X_K \in A] &= \sum_{k = 1}^\infty \Prob[K = k, X_k \in A] \nonumber \\
    &= \lim_{k \to \infty} \Prob[K \leq k, X_K \in A]. \nonumber
\end{align}
A computation analogous to \Cref{eq:accept_prob_telescopic} shows that 
${\Prob[K \leq k, X_K \in A] = Q(A) - Q(\lvlset_k \cap A) + L_k P(\lvlset_k \cap A)}$.
Thus, we find that $\Delta_k(A) = Q(\lvlset_k \cap A) - L_k P(\lvlset_k \cap A)$, which is positive by the definition of $\lvlset_k$.
Since $\Delta_k$ is positive, we have $\norm{\Delta_k}_{TV} = \Delta_k(\Omega) = 1 - \Prob[K \leq k] = S_{k + 1}$. Thus, $\lim_{k \to \infty}\norm{\Delta_k}_{TV} = 0$ implies $\Prob[K < \infty] = 1$.
Furthermore, $\norm{Q(A) - \Prob[X_K \in A]}_{TV} = \lim_{k \to \infty} \norm{\Delta_k}_{TV} = 0$ implies that the output distribution is $Q$, as required. 
\end{proof}
\begin{lemma}
\label{lemma:survival_prob_bound}
$S_{k + 1} \leq \exp\left(-\sum_{n = 1}^k \frac{P(\lvlset_n)}{P(B_n)} \right)$.
\end{lemma}
\begin{proof}
By the definitions of $S_{k + 1}$ and $L_k$,
\begin{align}
    S_{k + 1} &= 
    Q(\lvlset_k) - L_k P(\lvlset_k) \nonumber \\
    &= 
    Q(\lvlset_k) - L_{k - 1} P(\lvlset_k) - \frac{S_k P(\lvlset_k)}{P(B_k)} \nonumber \\
    &\leq Q(\lvlset_{k - 1}) - L_{k - 1} P(\lvlset_{k - 1}) - \frac{S_k P(\lvlset_k)}{P(B_k)} \nonumber\\
    &\leq S_k \left(1 - \frac{P(\lvlset_k)}{P(B_k)}\right). \nonumber
\end{align}
Applying the inequality $k$ times and noting that $S_1 = 1$, we find $S_{k + 1} \leq \prod_{n = 1}^k\left( 1 - \frac{P(\lvlset_n)}{P(B_n)}\right)$.
For $0 \leq c \leq 1$ we have $1 - c \leq \exp(-c)$.
We can apply this inequality to each factor of the product as $0 \leq P(\lvlset_n)/P(B_n) \leq 1$, since $\lvlset_n \subseteq B_n$.
\end{proof}
To finish the proof, we examine the two possible behaviours of the sequence $P(\lvlset_k)/P(B_k)$.
\textbf{Case 1:} Assume $P(\lvlset_k)/P(B_k) \to 0$ as $k \to \infty$.
Since $\Delta_k \ll P\lvert_{B_k}$, this implies $\Delta_k(\lvlset_k) \to 0 $.
This further implies that $\norm{\Delta_k}_{TV} = \Delta_k(\Omega) \to 0$, since $\Delta_k$ is entirely supported on $\lvlset_k$.
\textbf{Case 2:}
Assume $P(\lvlset_k)/P(B_k) \not\to 0$ as $k \to \infty$.
Then, since it is a positive sequence, there exists $\epsilon > 0$ for all $k$, such that $P(\lvlset_k)/P(B_k) \geq \epsilon$.
But then by \Cref{lemma:survival_prob_bound} we have $\norm{\Delta_k}_{TV} = S_{k + 1} \leq \exp\left(-\sum_{n = 1}^k \frac{P(\lvlset_n)}{P(B_n)}\right) \leq \exp(-k\epsilon) \to 0$ as $k \to \infty$, which finishes the proof.
\end{proof}

\begin{theorem}
\label{thm:agrs_codelength}
Define 
$m_k \defeq \argmin_{n \in [k]} \left\{ S_n / P(B_n) \right\}$.
Then,
$\Exp[\loge K] \leq \KLD{Q}{P} + \Exp\left[ \loge P(B_{m_K})  \right] + 1 + \loge 2$.
In particular, $\Exp[\loge K] \leq \KLD{Q}{P} + 1 + \loge 2$ for GRS.
\end{theorem}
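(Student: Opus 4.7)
The strategy is to bound $K$ pointwise in terms of the density ratio $r = dQ/dP$ evaluated at the output $X_K$, then take expectations. I first establish the pivotal inequality $K - 1 \leq r(X_K) \cdot P(B_{m_K})/S_{m_K}$. This follows from two facts: acceptance at step $K$ forces $r(X_K) \geq L_{K-1}$ (since $\beta_K > 0$ requires this), and the recursion $L_k = L_{k-1} + S_k/P(B_k)$ gives $L_{K-1} = \sum_{n=1}^{K-1} S_n/P(B_n) \geq (K-1)\, S_{m_K}/P(B_{m_K})$, using that $m_K$ minimises $S_n/P(B_n)$ over $[K] \supseteq [K-1]$, so that $S_{m_K}/P(B_{m_K}) \leq S_n/P(B_n)$ for every $n \in [K-1]$.

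Next I take logarithms. For $K \geq 2$, we have $K \leq 2(K-1)$, which combined with the pivotal inequality yields $\loge K \leq \loge 2 + \loge r(X_K) + \loge P(B_{m_K}) - \loge S_{m_K}$. Moreover the right-hand side is non-negative on $\{K \geq 2\}$, because $K - 1 \geq 1$ forces $r(X_K) P(B_{m_K})/S_{m_K} \geq 1$. For $K = 1$ we have $\loge K = 0$, so multiplying the right-hand side by $\Ind[K \geq 2]$ produces a bound valid in both cases.

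Taking expectations yields three terms. By \Cref{thm:adaptive_grs_correctness} we have $X_K \sim Q$, so $\Exp[\loge r(X_K)] = \KLD{Q}{P}$. The $\loge P(B_{m_K})$ term is left as stated. For $\Exp[-\loge S_{m_K}]$, I couple AGRS with a uniform variable $V \sim \Unif[0, 1]$: realising $K$ as $K = \min\{k : S_{k+1} \leq V\}$ reproduces the correct law of $K$ and forces $S_K \geq V$ almost surely. Since $m_K \leq K$ and $(S_n)_{n \geq 1}$ is decreasing, $S_{m_K} \geq S_K \geq V$, so $\Exp[-\loge S_{m_K}] \leq \Exp[-\loge V] = \int_0^1 -\loge v \, dv = 1$. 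The GRS specialisation is then immediate: $B_k = \Omega$ gives $P(B_{m_K}) = 1$ and hence $\Exp[\loge P(B_{m_K})] = 0$.

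The main obstacle is handling the $K = 1$ event without inflating the additive constant. Dropping $\Ind[K \geq 2]$ inside the expectation introduces a correction of the form $-\Exp[\Ind[K = 1](\loge r(X_1) + \loge P(B_1))]$, and bounding the integrand via the elementary inequality $-s \loge s \leq 1/e$ for $s \in [0,1]$ only yields an extra $1/e$ slack. Obtaining the clean constant $1 + \loge 2$ therefore calls for a sharper decomposition, such as conditioning on the joint law of $K$ and $X_K$ or exploiting that the contribution from $\{K = 1\}$ partially cancels against the corresponding piece of the $\KLD{Q}{P}$ integral.
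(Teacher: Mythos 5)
Your approach matches the paper's own proof step-for-step: you establish the pointwise inequality ${K-1 \leq r(X_K)\,P(B_{m_K})/S_{m_K}}$ from $r(X_K) \geq L_{K-1}$ and the recursion ${L_{K-1} = \sum_{n=1}^{K-1} S_n/P(B_n) \geq (K-1)\,S_{m_K}/P(B_{m_K})}$, take logs using $\loge K \leq \loge 2 + \loge(K-1)$ for $K \geq 2$, invoke $X_K \sim Q$ from \Cref{thm:adaptive_grs_correctness} to identify $\Exp[\loge r(X_K)]$ with $\KLD{Q}{P}$, and finish with $\Exp[\loge S_{m_K}] \geq \Exp[\loge S_K]$ and $-\Exp[\loge S_K] \leq 1$. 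Your coupling $K = \min\{k : S_{k+1} \leq V\}$ with $V \sim \Unif[0,1]$ is a correct self-contained derivation of that last bound, whereas the paper cites Harsha et al.\ for it; otherwise the two arguments are the same.

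The concern you raise in your final paragraph about the $K=1$ event is genuine, and --- notably --- it is present but unaddressed in the paper's proof as well. The paper writes ${\Exp[\loge K] = \sum_{k\geq 2}\int \alpha_k \loge k \, dP}$, applies the per-$k$ bound (valid on $\lvlset_{k-1}$ for $k \geq 2$), and then asserts the result ``simplifying'' to $\KLD{Q}{P} + \Exp[\loge P(B_{m_K})] - \Exp[\loge S_{m_K}] + \loge 2$. That step silently re-inserts the $k=1$ piece of the bounding integrand, namely $\int \alpha_1(\loge r + \loge P(B_1) + \loge 2)\,dP = \int \min\{r, 1\}\loge(2r)\,dP$ (using that $B_1 = \Omega$ always, since $\lvlset_0 = \Omega$), which must be non-negative for the manipulation to hold. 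It need not be: with $P$ uniform on $[100]$ and $Q$ placing mass $0.9$ on one atom and $0.1$ spread uniformly over the remaining $99$, the integral evaluates to about $-0.11$. The pointwise estimate $\min\{r,1\}\loge(2r) \geq -1/(2e)$ (the minimum of $r\loge(2r)$ over $r>0$, attained at $r=1/(2e)$) caps the deficit, so the written argument actually supports the constant $1+\loge 2 + 1/(2e)$ rather than $1+\loge 2$. In short, your reconstruction is faithful, and your honest flag identifies a real loose end that the paper glosses over; closing it either needs a sharper handling of the $k=1$ term or a slightly larger additive constant.
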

\begin{proof}
Define $\alpha_k(x) \defeq \frac{d\Prob[K = k, X_k = x]}{dP}$.
Then,
\begin{align}
    \Exp[\loge K] &= \Prob[K = 1] \loge 1 + \sum_{k = 2}^\infty \Prob[K = k] \loge k \nonumber \\
    &= \sum_{k = 2}^\infty \int_{B_k} \alpha_k(x) \loge k \, dP(x). \label{eq:log_K_exact_form}
\end{align}
Next, we develop an appropriate bound for the integrand.
Note, that $\alpha_k$ is supported on $\lvlset_{k - 1}$. 
Then, for $x \in \lvlset_{k - 1}$
\begin{align}
    r(x) &\geq (k - 1)\left(\frac{1}{k - 1}\sum_{n = 1}^{k - 1}\frac{S_n}{P(B_n)}\right) \nonumber \\
    &\geq (k - 1)\min_{n \in [k]}\left\{\frac{S_k}{P(B_k)} \right\}. \nonumber
\end{align}
Second, for $k \geq 2$ we have $\loge k \leq \loge(k - 1) + \loge 2$.
Hence,
\begin{align}
   \loge k \leq \loge r(x) - \loge(1 - T_{m_k - 1}) + \loge P(B_{m_k}) + \loge 2. \nonumber
\end{align}
Substituting this back into \Cref{eq:log_K_exact_form} and simplifying:
\begin{align}
    \Exp[\loge K] &\leq \KLD{Q}{P} + \Exp[\loge P(B_{m_K})] - \Exp\left[\loge S_{m_K} \right] + \loge 2 \nonumber \\
    &\leq \KLD{Q}{P} + \Exp[\loge P(B_{m_K})] + 1 + \loge 2, \nonumber
\end{align}
where we used the fact that $ \Exp\left[\loge S_{m_K} \right] \geq \Exp\left[\loge S_K \right]$ and a result from \cite{harsha2009communication} that shows ${- \Exp\left[\loge S_K \right] \leq \int_0^1 \loge\left(\frac{1}{1 - p} \right) = 1}$.
This proves the first part of the theorem.
For the second part, we note that in GRS we set $B_k = \Omega$ for every $k$, from which $\Exp[\loge P(B_{m_K})] = 0$, and the result follows.
\end{proof}

Theorem~\ref{thm:agrs_codelength} can be used to show that in the RCC case, when $(X, Z) \sim P_{X, Z}$ and AGRS is applied to targets $P_{X \mid Z}$ and proposal $P_X$, we have
\begin{align}
    \Entropy[K] \leq C + \loge(C + 1) + 3.63
    \label{eq:agrs_coding_cost}
\end{align}
where $C = I[X; Z] + \mathbb{E}[\loge P(B_{m_K})]$, using a similar argument as Li \& El Gamal for their Theorem~1 in \cite{li2018strong}\appendixref{ (see Appendix~\ref{sec:entropy_bound})}.


\begin{theorem}
\label{thm:adaptive_grs_runtime}
$ \Exp[K] \leq \exp \left ( \infD{Q}{P} \right)$, and we have equality in the case of GRS.
\end{theorem}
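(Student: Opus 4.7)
The plan is to relate $\Exp[K]$ to the level sequence $L_k$ maintained by AGRS. The tail-sum formula combined with \Cref{lemma:survival_prob} gives $\Exp[K] = \sum_{k=1}^{\infty}\Prob[K \geq k] = \sum_{k=1}^{\infty} S_k$. Unrolling the update $L_k = L_{k-1} + S_k/P(B_k)$ from $L_0 = 0$ gives $L_N = \sum_{n=1}^{N} S_n/P(B_n) \geq \sum_{n=1}^{N} S_n$ since $P(B_n) \leq 1$. The sequence $L_k$ is non-decreasing, so passing to the limit yields $\Exp[K] \leq L_\infty \defeq \lim_{k \to \infty} L_k$, with equality in the GRS case where $P(B_n) = 1$.

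The key step is to prove by induction that $L_k \leq \exp(\infD{Q}{P}) = \norm{r}_\infty$ for every $k$, where $r = dQ/dP$ and $\norm{r}_\infty$ denotes its $P$-essential supremum. The base case $L_0 = 0$ is trivial. For the inductive step, I would rearrange the identity $S_k = Q(\lvlset_{k-1}) - L_{k-1} P(\lvlset_{k-1})$ (already used in the proof of \Cref{lemma:survival_prob_bound}) as $S_k = \int_{\lvlset_{k-1}}(r - L_{k-1})\,dP$, then bound $r \leq \norm{r}_\infty$ $P$-a.e.\ to get $S_k \leq (\norm{r}_\infty - L_{k-1})\,P(\lvlset_{k-1})$. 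Since $\lvlset_{k-1} \subseteq B_k$, the ratio $c_k \defeq P(\lvlset_{k-1})/P(B_k)$ lies in $[0,1]$, so $L_k \leq L_{k-1}(1 - c_k) + \norm{r}_\infty\, c_k$ is a convex combination of $L_{k-1}$ and $\norm{r}_\infty$, and the inductive hypothesis closes the step. Combined with the first paragraph, this yields $\Exp[K] \leq \exp(\infD{Q}{P})$.

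For the equality in the GRS case, it remains to show $L_\infty = \norm{r}_\infty$. I would argue by contradiction: if $L_\infty < \norm{r}_\infty$, then by the definition of the essential supremum there exist $\epsilon > 0$ and a Borel set $A$ with $P(A) > 0$ on which $r > L_\infty + \epsilon$. Since $L_k \leq L_\infty$ for every $k$, we have $A \subseteq \lvlset_k$ for every $k$, and hence $S_{k+1} = \int_{\lvlset_k}(r - L_k)\,dP \geq \epsilon\, P(A) > 0$ uniformly in $k$. This contradicts $S_k \to 0$, which follows from $\Prob[K < \infty] = 1$ in \Cref{thm:adaptive_grs_correctness}. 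I expect this last step to be the main obstacle: the upper bound follows cleanly from monotone induction, but ruling out stagnation of $L_k$ strictly below $\norm{r}_\infty$ requires linking a purely geometric property of the density ratio to the probabilistic correctness guarantee.
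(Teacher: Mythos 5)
Your proof is correct and follows essentially the same two-stage structure as the paper's: bound $\Exp[K] \leq \lim_k L_k$ via the telescoping sum and $P(B_k) \leq 1$, then show $\lim_k L_k = \exp(\infD{Q}{P})$ by an upper-bound induction plus a lower-bound contradiction. Your inductive step is in fact stated more carefully than the paper's: you use $S_k = \int_{\lvlset_{k-1}}(r - L_{k-1})\,dP$ and $\lvlset_{k-1} \subseteq B_k$ to write $L_k$ as a convex combination of $L_{k-1}$ and $\norm{r}_\infty$, whereas the paper's displayed identity $S_{n+1} = Q(B_{n+1}) - L_n P(B_{n+1})$ only holds with $\lvlset_n$ in place of $B_{n+1}$ (since $r < L_n$ on $B_{n+1}\setminus\lvlset_n$, the displayed quantity is actually a lower bound on $S_{n+1}$, which is the wrong direction for the conclusion as written); your version closes the step cleanly. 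For the lower bound, the paper derives a positive uniform lower bound on the increments $L_{k+1} - L_k$ and contradicts convergence of the bounded monotone sequence $L_k$, while you derive a positive uniform lower bound on $S_{k+1}$ and contradict $S_k \to 0$, which you pull from \Cref{thm:adaptive_grs_correctness}; these are equivalent and there is no circularity since that theorem is established independently. Minor stylistic remark: it is slightly cleaner to contradict the convergence of $L_k$ directly (as the paper does), keeping the lemma self-contained rather than importing the correctness theorem, but your route is equally valid.
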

\begin{proof}
To begin, note that
\begin{equation}
\label{eq:runtime_upper_bound}
    \Exp[K] = \sum_{k = 1}^\infty \Prob[K \geq k] \leq \sum_{k = 1}^\infty S_k / P(B_k) = \lim_{k \to \infty} L_k.
\end{equation}
Next, we show the following lemma:
\begin{lemma}
\label{lemma:level_limit}
$\lim_{k \to \infty}L_k = \exp\left(\infD{Q}{P}\right)$.
\end{lemma}
\begin{proof}
For convenience, let $r^* \defeq \exp(\infD{Q}{P})$.
We show the statement by proving an upper and a lower bound on the limit.
For the upper bound we use an inductive argument.
For the base case, observe that $L_0 = 0 \leq r^*$.
Now assume $L_n \leq r^*$.
Then, for $L_{n + 1}$ we have
\begin{align}
    L_{n + 1} &= L_n + S_{n + 1} / P(B_{n + 1}) \nonumber \\
    &= L_n + P(B_{n + 1})^{-1} (Q(B_{n + 1}) - L_n P(B_{n + 1})) \nonumber \\
    &\leq r^*, \nonumber
\end{align}
where for the inequality we used the fact that 
\begin{align}
  Q(B_{n + 1}) = \int_{B_{n + 1}}r\, dP \leq \int_{B_{n + 1}}r^*\,dP = r^*P(B_{n + 1}). \nonumber
\end{align}
Since by the above argument $L_k \leq r^*$ for every $k$, the bound also must hold for the limit.
\par
We prove the lower bound via contradiction.
The sequence $L_k$ is monotone increasing and is bounded from above by $r^*$, hence it converges.
Assume now, that ${\lim_{k \to \infty} L_k = r^* - \epsilon}$ for some $\epsilon > 0$.
Define ${\mathcal{C} \defeq \left\{ x \in \Omega \mid r(x) \geq r^* - \frac{\epsilon}{2} \right\}}$.
Then,
\begin{align}
    S_{k + 1} &= Q(\lvlset_k) - L_k P(\lvlset_k) \nonumber\\
    &\geq Q(\mathcal{C}) - L_k P(\mathcal{C}) \nonumber\\
    &\geq \left(r^* - \frac{\epsilon}{2} - r^* + \epsilon \right) P(\mathcal{C}) \nonumber\\
    &= \frac{\epsilon}{2} P(\mathcal{C}), \nonumber
\end{align}
where for the second inequality we used the fact that ${Q(\mathcal{C}) \geq (r^* - \epsilon / 2)P(\mathcal{C})}$ by the definition of $\mathcal{C}$ and that ${L_k \leq r^* - \epsilon}$, since the $L_k$ sequence is monotonically increasing.
Note, that since ${r^* = \esssup r}$ with respect to $P$, we must have ${P(\mathcal{C}) > 0}$.
But now we find that for all $k$
\begin{align}
    L_{k + 1} - L_k = \frac{S_{k + 1}}{P(B_{k + 1})} \geq \frac{\epsilon}{2} \cdot \frac{P(\mathcal{C})}{P(B_{k + 1})} > 0, \nonumber
\end{align}
hence $L_k$ must diverge; a contradiction.
\end{proof}
Applying \Cref{lemma:level_limit} to \Cref{eq:runtime_upper_bound} yields the first part of the theorem.
For the second part, we note that for GRS \Cref{eq:runtime_upper_bound} holds with equality since $B_k = \Omega$ for all $k$.
\end{proof}

\section{Gaussian Channel Simulation}
\label{sec:gaussian_channel_simulation}
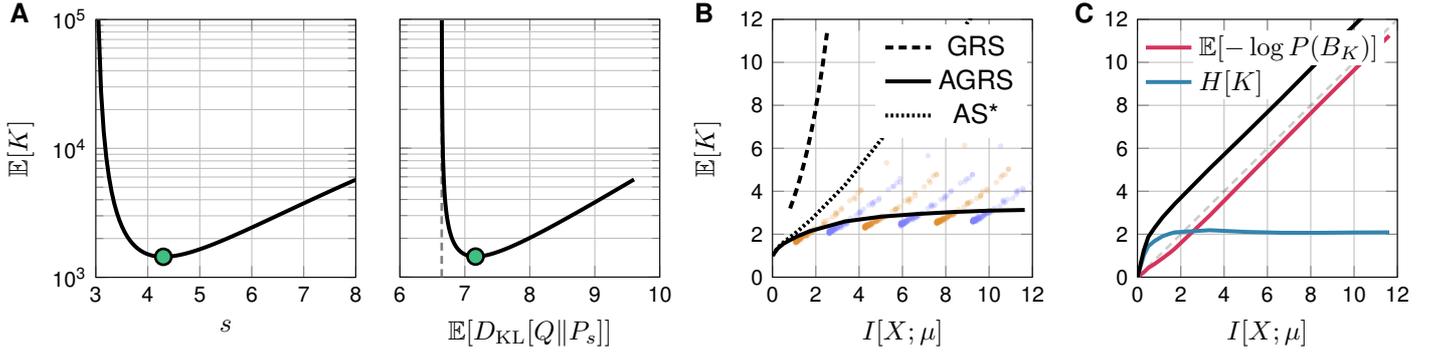
\begin{figure*}
    \hspace{-0.6cm}
\begin{tikzpicture}
    \node at (-1cm, 3.5cm) {\textbf{\textsf{A}}};
    \begin{axis}[
            xlabel={$s$},
            ylabel={$\mathbb{E}[K]$},
            width=5cm,
            height=5cm,
            xmin=3,
            xmax=8,
            xtick={3,4,...,8},
            ymin=1000,
            ymax=100000,
            grid=both,
            ymode=log,
        ]
        \addplot[black,line width=1.5pt] table[x=sigma,y=runtime,col sep=comma] {figures/data/runtime_4.csv};
        \addplot[white!50!blue!50!green,mark=*,mark size=2.3,only marks] coordinates {
            (4.299631726148781, 1441.99930652)
        };
        \addplot[black,mark=o,mark size=3.0,only marks] coordinates {
            (4.299631726148781, 1441.99930652)
        };
    \end{axis}
    \begin{axis}[
            xshift=4cm,
            xlabel={$\mathbb{E}[\KLD{Q}{P_s}]$},
            width=5cm,
            height=5cm,
            xmin=6,
            xmax=10,
            ymin=1000,
            ymax=100000,
            xtick={3,4,...,10},
            yticklabels={},
            grid=both,
            ymode=log,
        ]
        \addplot[black!50,densely dashed] coordinates {
            (6.643856189774725, 1000)
            (6.643856189774725, 100000)
        };
        \addplot[black,line width=1.5pt] table[x=bits,y=runtime,col sep=comma] {figures/data/runtime_4.csv};
        \addplot[white!50!blue!50!green,mark=*,mark size=2.3,only marks] coordinates {
            (7.164152419167847, 1441.99930652)
        };
        \addplot[black,mark=o,mark size=3.0,only marks] coordinates {
            (7.164152419167847, 1441.99930652)
        };
    \end{axis}
    \tikzstyle{dots}=[mark=star,mark size=1,opacity=.2,only marks]
    \tikzstyle{dots1}=[dots,yellow!50!red!90!black]
    \tikzstyle{dots2}=[dots,blue!50!white]
    \node at (8cm, 3.5cm) {\textbf{\textsf{B}}};
    \begin{axis}[
            xshift=8.9cm,
            xlabel={$I[X; \mu]$},
            ylabel={$\mathbb{E}[K]$},
            width=5cm,
            height=5cm,
            xmin=0,
            xmax=12,
            xtick={0,2,...,12},
            ytick={0,2,...,12},
            ymin=0,
            ymax=12,
            grid=both,
            clip marker paths=true,
            legend style={draw=none},
        ]
        \addplot[dots1] table[x=kld,y=num_iter,col sep=comma,forget plot] {figures/data/kld_num_iter_6.csv};
        
        \addplot[dots2] table[x=kld,y=num_iter,col sep=comma,forget plot] {figures/data/kld_num_iter_7.csv};
        
        \addplot[dots1] table[x=kld,y=num_iter,col sep=comma,forget plot] {figures/data/kld_num_iter_8.csv};
        
        \addplot[dots2] table[x=kld,y=num_iter,col sep=comma,forget plot] {figures/data/kld_num_iter_9.csv};
        
        \addplot[dots1] table[x=kld,y=num_iter,col sep=comma,forget plot] {figures/data/kld_num_iter_10.csv};
        
        \addplot[dots2] table[x=kld,y=num_iter,col sep=comma,forget plot] {figures/data/kld_num_iter_11.csv};
        
        \addplot[densely dashed,black,line width=1.5pt] table[x=info,y=runtime,col sep=comma,select coords between index={1}{11}] {figures/data/runtime_grs.csv};
        \addlegendentry{\sffamily GRS};
        
        \addplot[black,line width=1.5pt] table[x=kld,y=num_iter,col sep=comma] {figures/data/mean_kld_num_iter.csv};
        \addlegendentry{\sffamily AGRS};
        
        \addplot[densely dotted,black,line width=1.5pt] table[x=mi,y=a_star_avg_runtime,col sep=comma] {figures/data/split_on_sample_a_star_avg_runtime_vs_mutual_info.csv};
        \addlegendentry{\sffamily AS*};
    \end{axis}
    \node at (13cm, 3.5cm) {\textbf{\textsf{C}}};
    \begin{axis}[
            xshift=13.7cm,
            xlabel={$I[X; \mu]$},
            width=5cm,
            height=5cm,
            xmin=0,
            xmax=12,
            ymin=0,
            ymax=12,
            xtick={0,2,...,12},
            ytick={0,2,...,12},
            grid=both,
            clip marker paths=true,
            legend style={draw=none,fill=none},
            legend cell align={left},
        ]
        \addplot[black,densely dashed,opacity=0.2,forget plot] coordinates {
            (0, 0)
            (20, 20)
        };
        
        \addplot[red!80!blue!80!white,line width=1.5pt] table[x=info,y=bound,col sep=comma] {figures/data/coding_cost.csv};
        \addlegendentry[fill=white]{$\mathbb{E}[-\log P(B_K)]$};
        
        \addplot[blue!60!green!80!white,line width=1.5pt] table[x=info,y=index,col sep=comma] {figures/data/coding_cost.csv};
        \addlegendentry[fill=white]{$H[K]$};
        
        \addplot[black,line width=1.5pt] table[x=info,y=sum,col sep=comma] {figures/data/coding_cost.csv};
    \end{axis}
\end{tikzpicture}
    \vspace{-0.2cm}
    \caption{\textbf{A:} The effect of using an overdispersed proposal distribution on the runtime of simulating 4-dimensional Gaussians with GRS. The target distributions all have variance 1 and the mean varies with standard deviation $\sigma = 3$. The proposal is Gaussian with variance $s^2 + 1$. The optimal coding cost is achieved when $s = \sigma$ but the expected runtime becomes infinite. The plot on the right shows that choosing the optimal runtime (green dot) increases the coding cost only slightly. \textbf{B:} Average runtime of GRS compared to Adaptive GRS applied to a 1-dimensional Gaussian target distribution whose variance is 1 and whose mean varies with standard deviation $\sigma$. Here, each point on the horizontal axis corresponds to a different value of $\sigma$. For GRS, we used the optimally overdispersed proposal distribution. For AGRS, no overdispersion was necessary. Additionally, each cluster (yellow or blue) corresponds to a fixed $\sigma$ and each point to a target distribution where we plotted the KL divergence against the expected runtime (estimated over 400 runs of AGRS with different random seeds). 
    AS* refers to the runtime of AS* coding from \cite{flamich2022fast} averaged over $10^5$ runs.
    \textbf{C:} Analysis of the coding cost. The amount of information contained in $K$ appears surprisingly constant and most of the information is contained in the bounds $B_K$. The black line indicates the sum of the two terms.}
    \label{fig:agrs_emp_results}
    \vspace{-0.5cm}
\end{figure*}
\noindent
In this section, we study the computational complexity of using GRS to simulate Gaussian channels where the input to the channel is also Gaussian distributed.
Formally, let $X$ be a $d$-dimensional Gaussian random variable, such that
\begin{align}
    X \mid \mu \sim \Normal(\mu, \rho^2 I), \quad \mu \sim \Normal(0, \sigma^2I), \nonumber
\end{align}
hence, marginally, $X \sim \Normal(0, (\sigma^2 + \rho^2)I)$.
In this setting, Alice would first draw a mean $\mu$, and Bob would like to simulate a sample $X \mid \mu$.
We assume that Alice and Bob share the marginal of $X$ before communication.
Then, by \Cref{thm:adaptive_grs_runtime},
\begin{align}
\label{eq:gaussian_rec_equation}
    \Exp[K \mid \mu] = \left(\frac{\rho^2 + \sigma^2}{\rho^2} \right)^{d/2} \exp\left(\frac{\norm{\mu}^2}{2\sigma^2}\right),
\end{align}
Now, taking the expectation over $\mu \sim \Normal(0, \sigma^2I)$, we find
\begin{equation}
    \Exp[K] = \infty. \nonumber
\end{equation}
In essence, this occurs because the right-hand side of \Cref{eq:gaussian_rec_equation} increases too rapidly in $\norm{\mu}$. 
To ameliorate this problem, we use an overdispersed proposal distribution ${\Normal(m, (\rho^2 + s^2)I)}$ instead, with some mean $m$ and $s^2 > \sigma^2$.
Computing $\Exp[K \mid \mu]$ and differentiating with respect to $m$, we find that $m = 0$ is optimal for any $s^2$.
Plugging this back into the expression for $\Exp[K \mid \mu]$ and taking expectation with respect to $\mu$, we find 
\begin{align}
    \Exp_{\mu \sim \Normal(0, \sigma^2I)}\left[\Exp[K \mid \mu]\right] = \left(\frac{s^2}{s^2 - \sigma^2} \cdot \frac{\rho^2 + s^2}{\rho^2}\right)^{d/2}. \nonumber
\end{align}
Differentiating with respect to $s^2$, we find that the optimal overdispersed variance is $s^2_{opt} = \sigma^2 + \sigma\sqrt{\rho^2 + \sigma^2}$.
Panel A in \Cref{fig:agrs_emp_results} illustrates how the runtime and coding cost scale as we increase the overdispersion.
\par
\textit{Implementing GRS for Gaussians:}
Here we describe how GRS can be implemented efficiently on a computer for a proposal $P_s = \Normal(0, (\rho^2 + s^2)I)$ and target $Q = \Normal(\mu, \rho^2I)$.
A direct calculation shows that
\begin{align}
    r(x) \defeq \frac{dQ}{dP}(x) &= \zeta \cdot \Normal(x \mid \nu, \kappa^2I), \nonumber
\end{align}
where we defined
\begin{align}
    \nu &\defeq \mu \cdot \frac{s^2 + \rho^2}{s^2} \nonumber\\
    \kappa &\defeq \frac{(s^2 + \rho^2) \cdot \rho^2}{s^2} \nonumber\\
    \zeta &\defeq \left(\frac{s^2 + \rho^2}{s^2}\right)^d \cdot \frac{1}{\Normal(\mu \mid 0, s^2I)}. \nonumber
\end{align}
\par
Now, we turn our attention to $Q(\lvlset_k)$ and $P_s(\lvlset_k)$.
First, note that for $x \in \lvlset_k$, we have
\begin{align}
-\loge L_k &\geq -\loge r(x) \nonumber \\
    &= -\loge \zeta + \frac{d}{2}\loge 2\pi + \frac{d}{2}\loge \kappa^2 + \frac{\norm{x - \nu}^2}{2\kappa^2} \nonumber
\end{align}
After some rearranging, we get
\begin{align}
    &\norm{x - \nu}^2 \leq R^2_k \nonumber \\
    &R^2_k \defeq \kappa^2\left(-2\loge L_k + 2 \loge \zeta - d\loge 2\pi - d \loge \kappa^2 \right), \nonumber
\end{align}
i.e., the $\lvlset_k$ sets will be spheres with center $\nu$ and radius $R_k$.
Assume now that $X_k \sim P_s$. 
Then
\begin{align}
    P_s(\lvlset_k) 
    &= \Prob \left[ \norm{X_k - \nu}^2 \leq R^2_k \right] \nonumber \\
    &= \Prob\left[\chi^2 \left(d, \frac{\norm{\nu}^2}{\rho^2 + s^2} \right) \leq \frac{R_k^2}{\rho^2 + s^2} \right], \nonumber
\end{align}
where $\chi^2(d, \lambda^2)$ is a noncentral chi-square random variable with $d$ degrees of freedom and noncentrality parameter $\lambda^2$.
The CDF of $\chi^2(d, \lambda^2)$ is readily available in modern numerical statistics packages.
An analogous argument shows that
\begin{align}
    Q(\lvlset_k) = \Prob\left[\chi^2\left(d, \frac{\norm{\nu - \mu}^2}{\rho^2}\right) \leq \frac{R_k^2}{\rho^2} \right]. \nonumber
\end{align}
We note that the above can be extended to non-isotropic $(Q, P)$ pairs. 
In that case, the $\lvlset_k$ sets are ellipsoidal and $P_s(\lvlset_k)$ and $Q(\lvlset_k)$ can be expressed using the CDF of a generalized chi-square random variable instead.
\par
\textit{AGRS for Gaussian channels:}
While \Cref{alg:adaptive_grs} can adapt to the target by using different bounds at each step, it is hard to leverage this additional flexibility without making further assumptions.
To solve the OSCS problem, Bob needs to know at least some information about the bounds Alice used. 
Otherwise Bob cannot decode Alice's message. 
Here, we propose a scheme for one-dimensional Gaussian channel simulation.
We conjecture and experimentally verify its correctness and efficiency.
Our method is inspired by hybrid coding \cite{theis2022algorithms} and dithered quantization (DQ) \cite{ziv1985universal} and is potentially applicable to other distributions and multivariate problems.
\par
First, let $\TargetFamily$ denote the family of possible targets under consideration. In our case, $\TargetFamily = \{\Normal(\mu, \rho^2) \mid \mu \in \Reals \}$ are one-dimensional Gaussian distributions.
We propose the following set of bounds.
Let $\lvlset_k^0$ be the $k$th level set computed for $\Normal(0, \rho^2)$ ($\lvlset_0^0 \defeq \Omega$) and set $B_k = \Phi^{-1}(c_k + \Phi(\lvlset^0_{k - 1}))$ for some $c_k$ which depends on $\mu$. Here, $\Phi$ is the CDF of $P$, that is, the marginal distribution of the Gaussian sample, $\mathcal{N}(0, \sigma^2 + \rho^2)$.
To ensure the correctness of \Cref{alg:adaptive_grs}, we require that for every $Q \in \TargetFamily$, there exists a $c_k$ so that $\lvlset_{k - 1} \subseteq B_k$.
Indeed, it is not hard to check that this holds for $k = 0, 1$.
We also verified numerically for several thousand choices of $Q$ that the relation holds for $k \geq 2$.

For AGRS to be decodable, Alice and Bob need to be able to simulate the same $X_k \sim P\vert_{B_k}$.
Fortunately, when $\Phi(B_k)$ are intervals, they can use DQ along inverse transform sampling to achieve precisely this at a coding cost of $-\logtwo P(B_k) + O(1)$ bits.
This idea is essentially identical to how DQ is used for hybrid coding \cite{theis2022algorithms} except that there the bounds are only adapted once before sampling begins, while in our case the bounds can change at each step of the sampling procedure.
\par
We found empirically (\Cref{fig:agrs_emp_results}B) that using this sequence of bounds lead to an exponential speed-up of \Cref{alg:adaptive_grs}.
Based on our results, we conjecture that if our proposed AGRS variant is correct, then its runtime is $\Oh(\KLD{Q}{P})$.
\par
The cost of encoding a sample consists of the cost of encoding $K$ plus the cost of communicating $X_k \sim P\vert_{B_k}$. The latter can be encoded with a variant of DQ based on bits-back coding \cite{bamler2022understanding} at a cost close to $\Exp[-\loge P(B_k)]$\appendixref{ (Appendix~\ref{sec:bbq})}. Let $N$ be the quantized representation produced by DQ. Then the coding cost of AGRS is
\begin{align}
\Entropy&[K, N] = \Entropy[K] + \Entropy[N \mid K] \nonumber \\
&\leq I[X; \mu] + \Exp[\loge P(B_{m_K})] - \Exp[\loge P(B_K)] \nonumber \\
&\quad + \loge(I[X; \mu] + \Exp[\loge P(B_{m_K})] + 1) + \Oh(1). \nonumber
\end{align}
Here we implicitly conditioned on the shared source of randomness to reduce clutter. In practice, we found the coding cost to be closer to $I[X; \mu] + 2$ (Figure~\ref{fig:agrs_emp_results}C). Interestingly, we also find that the majority of the information is contained in $N$ and that $H[K]$ stays around $2$ regardless of the choice of~$\sigma^2$.

\section{Discussion and Future Work}
\noindent
In this paper, we generalised the greedy rejection sampling algorithm first proposed by Harsha et al. \cite{harsha2009communication} by extending it to arbitrary probability spaces and allowing it to adapt to its target distribution at each step.
We showed that for a $(Q, P)$ target-proposal pair, the runtime of AGRS is upper bounded by $\exp(\infD{Q}{P})$ with equality in the case of GRS.
We studied Gaussian RCC using GRS and proposed an overdispersion scheme to ensure finite expected runtime of the algorithm at the cost of a small increase in the coding cost.
We proposed an AGRS scheme for the 1D Gaussian case and conjectured and empirically verified its correctness and runtime.
\par
There are several open problems and avenues for future work.
First, the inequality ${\KLD{Q}{P} \leq \infD{Q}{P}}$ can be arbitrarily loose in general. Hence, we ask if there exists a general channel simulation algorithm whose runtime scales as $\Oh(\exp(\KLD{Q}{P}))$, or if $\exp(\infD{Q}{P})$ is optimal.
Second, we ask if, and under what conditions \Cref{thm:agrs_codelength} and \Cref{thm:adaptive_grs_runtime} can be tightened.
Finally, we leave the correctness and runtime proofs of our proposed variant of AGRS for one-dimensional Gaussians for future work.

\section*{Acknowledgments}
This work was done while Gergely Flamich was an intern at Google. The authors would like to thank Phil Chou for helpful feedback on the manuscript.

\bibliographystyle{IEEEtran}
\bibliography{references}

\begin{thebibliography}{10}
\providecommand{\url}[1]{#1}
\csname url@samestyle\endcsname
\providecommand{\newblock}{\relax}
\providecommand{\bibinfo}[2]{#2}
\providecommand{\BIBentrySTDinterwordspacing}{\spaceskip=0pt\relax}
\providecommand{\BIBentryALTinterwordstretchfactor}{4}
\providecommand{\BIBentryALTinterwordspacing}{\spaceskip=\fontdimen2\font plus
\BIBentryALTinterwordstretchfactor\fontdimen3\font minus
  \fontdimen4\font\relax}
\providecommand{\BIBforeignlanguage}[2]{{%
\expandafter\ifx\csname l@#1\endcsname\relax
\typeout{** WARNING: IEEEtran.bst: No hyphenation pattern has been}%
\typeout{** loaded for the language `#1'. Using the pattern for}%
\typeout{** the default language instead.}%
\else
\language=\csname l@#1\endcsname
\fi
#2}}
\providecommand{\BIBdecl}{\relax}
\BIBdecl

\bibitem{harsha2009communication}
P.~Harsha, R.~Jain, D.~McAllester, and J.~Radhakrishnan, ``The communication
  complexity of correlation,'' \emph{IEEE Transactions on Information Theory},
  vol.~56, no.~1, pp. 438--449, 2010.

\bibitem{theis2022algorithms}
L.~Theis and N.~Y. Ahmed, ``Algorithms for the communication of samples,'' in
  \emph{International Conference on Machine Learning}.\hskip 1em plus 0.5em
  minus 0.4em\relax PMLR, 2022, pp. 21\,308--21\,328.

\bibitem{bennett2002}
C.~H. Bennett and P.~W. Shor, ``{Entanglement-Assisted Capacity of a Quantum
  Channel and the Reverse Shannon Theorem},'' \emph{IEEE Trans. Info. Theory},
  vol.~48, no.~10, 2002.

\bibitem{cuff2008}
P.~Cuff, ``Communication requirements for generating correlated random
  variables,'' in \emph{2008 IEEE International Symposium on Information
  Theory}, 2008, pp. 1393--1397.

\bibitem{havasi2019minimal}
M.~Havasi, R.~Peharz, and J.~M. Hern{\'a}ndez-Lobato, ``Minimal random code
  learning: Getting bits back from compressed model parameters,'' in \emph{7th
  International Conference on Learning Representations, ICLR 2019}, 2019.

\bibitem{flamich2020compressing}
G.~Flamich, M.~Havasi, and J.~M. Hern{\'a}ndez-Lobato, ``Compressing images by
  encoding their latent representations with relative entropy coding,''
  \emph{Advances in Neural Information Processing Systems}, vol.~33, pp.
  16\,131--16\,141, 2020.

\bibitem{agustsson2020universally}
E.~Agustsson and L.~Theis, ``Universally quantized neural compression,''
  \emph{Advances in neural information processing systems}, vol.~33, pp.
  12\,367--12\,376, 2020.

\bibitem{shah2021optimal}
\BIBentryALTinterwordspacing
A.~Shah, W.-N. Chen, J.~Balle, P.~Kairouz, and L.~Theis, ``Optimal compression
  of locally differentially private mechanisms,'' in \emph{Artificial
  Intelligence and Statistics}, 2022. [Online]. Available:
  \url{https://arxiv.org/abs/2111.00092}
\BIBentrySTDinterwordspacing

\bibitem{flamich2022fast}
G.~Flamich, S.~Markou, and J.~M. Hern{\'a}ndez-Lobato, ``Fast relative entropy
  coding with {A*} coding,'' in \emph{International Conference on Machine
  Learning}.\hskip 1em plus 0.5em minus 0.4em\relax PMLR, 2022.

\bibitem{theis2022lossy}
\BIBentryALTinterwordspacing
L.~Theis, T.~Salimans, M.~D. Hoffman, and F.~Mentzer, ``Lossy compression with
  {Gaussian} diffusion,'' 2022, preprint. [Online]. Available:
  \url{https://arxiv.org/abs/2206.08889}
\BIBentrySTDinterwordspacing

\bibitem{li2018strong}
C.~T. Li and A.~El~Gamal, ``Strong functional representation lemma and
  applications to coding theorems,'' \emph{IEEE Transactions on Information
  Theory}, vol.~64, no.~11, pp. 6967--6978, 2018.

\bibitem{ziv1985universal}
J.~Ziv, ``On universal quantization,'' \emph{IEEE Transactions on Information
  Theory}, vol.~31, no.~3, pp. 344--347, 1985.

\bibitem{bamler2022understanding}
R.~Bamler, ``Understanding entropy coding with asymmetric numeral systems
  ({ANS}): a statistician's perspective,'' \emph{arXiv preprint
  arXiv:2201.01741}, 2022.

\bibitem{zamir2014lattice}
R.~Zamir, \emph{Lattice Coding for Signals and Networks: A Structured Coding
  Approach to Quantization, Modulation, and Multiuser Information
  Theory}.\hskip 1em plus 0.5em minus 0.4em\relax Cambridge University Press,
  2014.

\end{thebibliography}

\appendices

\section{Entropy Bound in \texorpdfstring{\Cref{eq:agrs_coding_cost}}{Equation (3)}}
\label{sec:entropy_bound}
\noindent
We derive the bound in \Cref{eq:agrs_coding_cost}.
In the RCC case, when $(X, Z) \sim P_{X, Z}$ and we apply AGRS to target $P_{X \mid Z}$ and proposal $P_X$,
\Cref{thm:agrs_codelength} yields
\begin{align}
    \Exp[\loge K] &\leq \Exp_{Z \sim P_Z}\left[\KLD{P_{X \mid Z}}{P_X} + \Exp[\loge P(B_{m_K}) \mid Z] \right] \nonumber\\
    &\quad + 1 + \loge 2 \nonumber\\
    & =\underbrace{I[X; Z] + \Exp[\loge P(B_{m_K})]}_{\defeq C} + 1 + \loge 2. \nonumber
\end{align}
From \cite{li2018strong}, we know that
\begin{align}
    \Entropy_2[K] &\leq \Exp[\logtwo K] + \logtwo(\Exp[\logtwo K] + 1) + 1. \nonumber
\end{align}
Switching to nats and applying the above equality for $\Exp[\loge K]$:
\begin{align}
     \Entropy[K] &\leq \Exp[\loge K] + \loge(\Exp[\logtwo K] + 1) + \loge 2 \nonumber\\
    &=\Exp[\loge K] + \loge\left(\frac{\Exp[\loge K]}{\loge 2} + 1\right) + \loge 2 \nonumber\\
    &=\Exp[\loge K] + \loge\left(\Exp[\loge K] + \loge 2 \right) - \loge \loge 2 + \loge 2 \nonumber\\
    &\leq C + \loge\left(C + 1 + 2\loge 2 \right) + 1 + 2\loge 2 - \loge \loge 2 \nonumber\\
    &\leq C + \loge\left(C + 1 \right) + 1 + 2\loge 2 - \loge \loge 2 + \loge(1 + 2 \loge 2) \nonumber\\
    &< C + \loge\left(C + 1 \right) + 3.63. \nonumber
\end{align}

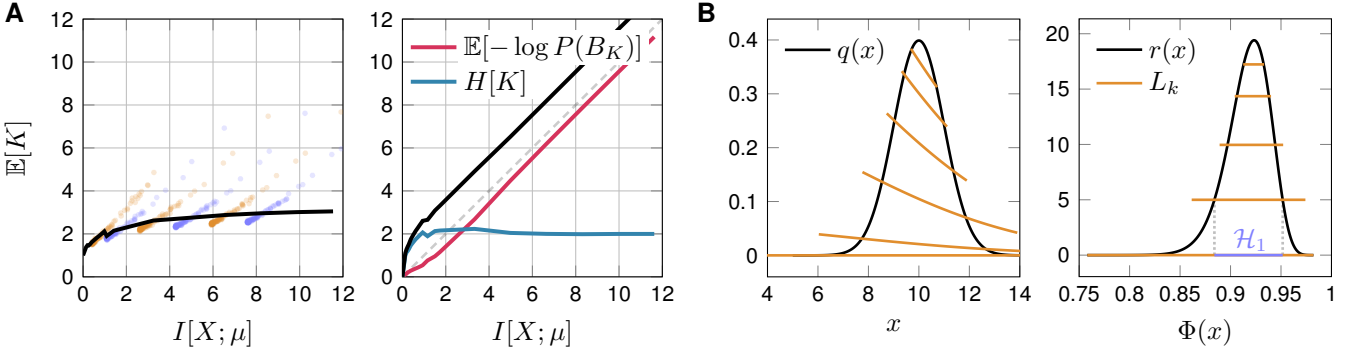
\begin{figure*}[t]
    \centering
    \hspace{-0.5cm}
\begin{tikzpicture}
    \node at (-0.9cm, 3.5cm) {\textbf{\textsf{A}}};
    \tikzstyle{dots}=[mark=star,mark size=1,opacity=.2,only marks]
    \tikzstyle{dots1}=[dots,yellow!50!red!90!black]
    \tikzstyle{dots2}=[dots,blue!50!white]
    \begin{axis}[
            xlabel={$I[X; \mu]$},
            ylabel={$\mathbb{E}[K]$},
            width=5cm,
            height=5cm,
            xmin=0,
            xmax=12,
            xtick={0,2,...,12},
            ytick={0,2,...,12},
            ymin=0,
            ymax=12,
            grid=both,
            clip marker paths=true,
            legend style={draw=none},
        ]
        \addplot[dots1] table[x=kld,y=num_iter,col sep=comma,forget plot] {figures/data/non_fractional/kld_num_iter_12.csv};
        
        \addplot[dots2] table[x=kld,y=num_iter,col sep=comma,forget plot] {figures/data/non_fractional/kld_num_iter_15.csv};
        
        \addplot[dots1] table[x=kld,y=num_iter,col sep=comma,forget plot] {figures/data/non_fractional/kld_num_iter_16.csv};
        
        \addplot[dots2] table[x=kld,y=num_iter,col sep=comma,forget plot] {figures/data/non_fractional/kld_num_iter_17.csv};
        
        \addplot[dots1] table[x=kld,y=num_iter,col sep=comma,forget plot] {figures/data/non_fractional/kld_num_iter_18.csv};
        
        \addplot[dots2] table[x=kld,y=num_iter,col sep=comma,forget plot] {figures/data/non_fractional/kld_num_iter_19.csv};
        
        
       \addplot[black,line width=1.5pt] table[x=kld,y=num_iter,col sep=comma] {figures/data/non_fractional/mean_kld_num_iter.csv};
    \end{axis}
    \begin{axis}[
            xshift=4.2cm,
            xlabel={$I[X; \mu]$},
            width=5cm,
            height=5cm,
            xmin=0,
            xmax=12,
            ymin=0,
            ymax=12,
            xtick={0,2,...,12},
            ytick={0,2,...,12},
            grid=both,
            clip marker paths=true,
            legend style={draw=none,fill=none},
            legend cell align={left},
        ]
        \addplot[black,densely dashed,opacity=0.2,forget plot] coordinates {
            (0, 0)
            (20, 20)
        };
        
        \addplot[red!80!blue!80!white,line width=1.5pt] table[x=info,y=bound,col sep=comma] {figures/data/non_fractional/coding_cost.csv};
        \addlegendentry[fill=white]{$\mathbb{E}[-\log P(B_K)]$};
        
        \addplot[blue!60!green!80!white,line width=1.5pt] table[x=info,y=index,col sep=comma] {figures/data/non_fractional/coding_cost.csv};
        \addlegendentry[fill=white]{$H[K]$};
        
        \addplot[black,line width=1.5pt] table[x=info,y=sum,col sep=comma] {figures/data/non_fractional/coding_cost.csv};
    \end{axis}
    \node at (8.2cm, 3.5cm) {\textbf{\textsf{B}}};
    \tikzstyle{ratio2}=[blue!50!white]
    \tikzstyle{proposal}=[yellow!50!red!90!black]
    \begin{axis}[
            xshift=9cm,
            legend style={draw=none,fill=none},
            legend cell align={left},
            legend pos={north west},
            xmin=4,
            xmax=14,
            xtick={4,6,...,14},
            ytick={0,0.1,...,0.5},
            width=4.9cm,
            height=5cm,
            xlabel={$x$},
        ]
        \addplot[black,line width=1pt] table[x=x,y=y,col sep=comma] {figures/data/agrs/q.csv};
        \addlegendentry{$q(x)$};
        
        \addplot[proposal,line width=1pt] table[x=x,y=y,col sep=comma] {figures/data/agrs/p_0.csv};
        
        \addplot[proposal,line width=1pt] table[x=x,y=y,col sep=comma] {figures/data/agrs/p_1.csv};
        \addplot[proposal,line width=1pt] table[x=x,y=y,col sep=comma] {figures/data/agrs/p_2.csv};
        \addplot[proposal,line width=1pt] table[x=x,y=y,col sep=comma] {figures/data/agrs/p_3.csv};
        \addplot[proposal,line width=1pt] table[x=x,y=y,col sep=comma] {figures/data/agrs/p_4.csv};
        \addplot[proposal,line width=1pt] table[x=x,y=y,col sep=comma] {figures/data/agrs/p_5.csv};
    \end{axis}
    \begin{axis}[
            xshift=13.1cm,
            legend style={draw=none,fill=none},
            legend cell align={left},
            legend pos={north west},
            xmin=0.75,
            xmax=1.0,
            xtick={0.75,0.8,...,1.05},
            ytick={0,5,...,25},
            width=4.9cm,
            height=5cm,
            xlabel={$\Phi(x)$},
        ]
        \addplot[gray!50!white, densely dotted, forget plot] coordinates {
            (0.88402, 0)
            (0.88402, 4.99868)
        };
        
        \addplot[gray!50!white, densely dotted, forget plot] coordinates {
            (0.95167, 0)
            (0.95167, 4.99868)
        };
        
        \addplot[black,line width=1pt] table[x=x,y=y,col sep=comma] {figures/data/agrs/r.csv};
        \addlegendentry{$r(x)$};
        
        \addplot[proposal,line width=1pt] table[x=x,y=y,col sep=comma] {figures/data/agrs/s_0.csv};
        \addlegendentry{$L_k$}
        
        \addplot[proposal,line width=1pt] table[x=x,y=y,col sep=comma] {figures/data/agrs/s_1.csv};
        \addplot[proposal,line width=1pt] table[x=x,y=y,col sep=comma] {figures/data/agrs/s_2.csv};
        \addplot[proposal,line width=1pt] table[x=x,y=y,col sep=comma] {figures/data/agrs/s_3.csv};
        \addplot[proposal,line width=1pt] table[x=x,y=y,col sep=comma] {figures/data/agrs/s_4.csv};
        \addplot[proposal,line width=1pt] table[x=x,y=y,col sep=comma] {figures/data/agrs/s_5.csv};
        
        \addplot[ratio2] coordinates {
            (0.88402, 0)
            (0.95167, 0)
        };
        
        \node[ratio2] at (axis cs:0.92,1.4) {$\lvlset_1$};
    \end{axis}
\end{tikzpicture}
    \caption{\textbf{A:} As in \Cref{fig:agrs_emp_results} but using wider bounds so that $1/P(B_k)$ is always an integer. This allows efficient communication of information about the bounds using dithered quantization without a need for bits-back coding. \textbf{B:} Illustration of AGRS on the example of a Gaussian distribution. On the right-hand side, $r = dQ/dP$ is plotted along with levels $L_k$ as a function of $\Phi(x)$, where $\Phi$ is the CDF of $P$. On the left-hand side, the same quantities are visualized as a function of $x$ and using the Lebesgue measure as a base measure. That is, $q = dQ/d\lambda$ and the yellow lines correspond to $L_k p(x)$ where $p = dP/d\lambda$. Additionally, the range of yellow lines is limited to $B_k$. Like GRS, AGRS slices the target distribution into a series of slices and targets one of the slices in each iteration.}
    \label{fig:agrs_illustration}
\end{figure*}

\section{Coding Cost of AGRS for Gaussian Channels}
Unlike in GRS, the index $K$ produced by AGRS is not sufficient to reconstruct the accepted candidate $X_K \sim P\vert_{B_K}$. For the case of the 1D Gaussian, we take
\begin{align}
    B_k = \Phi^{-1}(c_k + \Phi(\lvlset_{k - 1}^0)) \nonumber
\end{align}
and the receiver in general does not have access to $c_k$. Following Theis \& Yosri \cite{theis2022algorithms}, we assume that the missing information is communicated using dithered quantization (DQ). To this end, let $V_k \sim \Unif[-0.5, 0.5)$ be generated from a shared source of randomness $\BP$ and consider the following candidate generating process,
\begin{align}
    N_k &= \lfloor \textstyle\frac{c_k}{P(B_k)} - V_k \rceil, \nonumber\\
    Y_k &= (N_k + V_k) P(B_k), \nonumber\\
    X_k &= \Phi^{-1}(X_k), \nonumber
\end{align}
for all $k \in \mathbb{N}$. We assume that the sender chooses $c_k$ such that the support of $Y_k$ is contained in the interval $[0, 1)$.
While the receiver does not know $B_k$, we have $P(B_k) = |\Phi(\lvlset_k^0)|$ independent of the target distribution and so the receiver is able to compute $X_K$ after $K$ and $N_K$ have been received.
DQ has the property that $Y_k \sim c_k + P(B_k)V_k$ \cite{zamir2014lattice},
that is, $Y_k$ is uniform over the interval $c_k + \Phi(\lvlset_k^0)$ and so $X_k \sim P\vert_{B_k}$.
The sender transmits $K$ and $N_K$ which the receiver uses to reconstruct $X_K$. To reduce clutter, define
\begin{align}
    N &= N_K, & Y &= Y_K, & V &= V_K \nonumber
\end{align}
for the relevant quantities at acceptance.
Further, let ${Z = N + V}$ and let $p_{Z|K}$ be its density conditioned on $K$. Then
\begin{align}
    p_{Z | K}(z) 
    &= P_{N|K}(\lfloor z \rceil \mid K) p_{V|N,K}(z - \lfloor z \rceil \mid N = \lfloor z \rceil, K)
    \label{eq:z_to_nv}
\end{align}
and $p_{Z | K}(z) = p_{Y|K}(P(B_K)z) P(B_K)$.
We are interested in
\begin{align}
    H[(K, N) \mid \BP]
    &= H[K \mid \BP] + H[N \mid K, \BP] \nonumber \\
    &\leq H[K] + H[N \mid K, V]. \nonumber
\end{align}
For the second term, we have
\begin{align}
    & H[N \mid K, V] \nonumber\\
    &= h[(N, V) \mid K] - h[V \mid K] \nonumber\\
    &= \mathbb{E}[-\loge P(N \mid K) p(V \mid N, K) \mid K] - h[V \mid K] \nonumber\\
    &= \mathbb{E}[-\loge p_{Z|K}(Z)] - h[V \mid K] \label{eq:lotus} \\
    &= \mathbb{E}[-\loge p_{Y|K}(Y)] - \mathbb{E}[\loge P(B_K)] - h[V \mid K] \nonumber\\
    &= h[Y \mid K] - \mathbb{E}[\loge P(B_K)] - h[V \mid K] \nonumber\\
    &\leq \mathbb{E}[-\loge P(B_K)] - h[V \mid K] \nonumber
\end{align}
where we have applied the law of the unconscious statistician together with \Cref{eq:z_to_nv} in \Cref{eq:lotus} and the inequality follows because $Y$ is limited to the interval $[0, 1)$. Note that while the marginal distribution of $V_k$ is uniform over a unit interval for any fixed $k$, the marginal distribution of $V$ may be non-uniform if certain values of the dither are more likely to be accepted. Hence, in general we only have $h[V \mid K] \leq 0$. Theis \& Yosri \cite{theis2022algorithms} exploited that the accepted dither $V$ is marginally uniform when $1/P(B_k)$ are integers. In the following section, we propose a modification to DQ which produces uniform $V$ even when $1/P(B_k)$ are not integers.
For the 1D Gaussian case, we find that good performance can be achieved without BBQ by increasing intervals so that $1/P(B_k)$ is always an integer (\Cref{fig:agrs_emp_results}). However, we expect that this approach does not scale as well to multivariate problems as AGRS with BBQ.

\begin{figure*}[t]
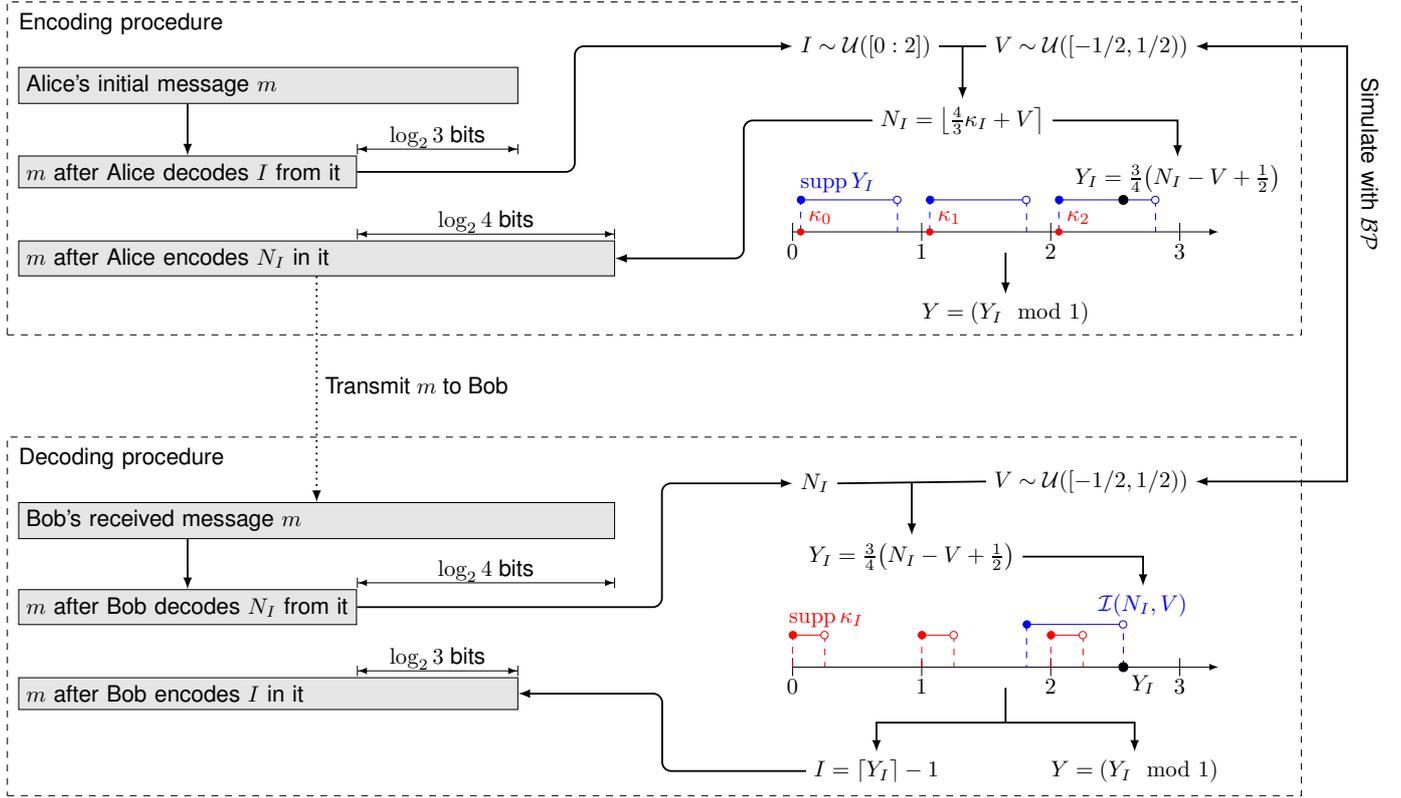

    \centering
    \includestandalone[width=\textwidth]{figures/bbq}
    \caption{Bits-back Quantization encoding and decoding procedure for a target distribution of $Q = \Unif(\kappa + [0, 3/4))$, i.e., \ $a = 3$ and $b = 4$.
    In the figure, $\supp Y_I$ and $\supp \kappa_I$ denote the support of $Y_I$ and $\kappa_I$, respectively.
    The encoding procedure begins in the top left with ``Alice's initial message $m$'' and follows the solid arrows.
    For a detailed description of the procedure and the notation, see \Cref{sec:bbq}.
    We assume that Alice already has some message $m$ she wishes to communicate to Bob.
    We also assume that both Alice and Bob can simulate $V$ with the publicly available randomness $\BP$ and they use the same invertible sampler to encode and decode messages to/from $m$.
    Crucially, since Alice begins by decoding $\log_2 3$ bits from her message, at the end of the encoding process her message length only increases by $-\log_2 3/4$ bits in total.}
    \label{fig:bbq_illustration}
\end{figure*}

\section{Bits-back Quantization}
\label{sec:bbq}
\noindent
We describe a channel simulation algorithm for a proposal $P$ and target ${P\vert_{B}}$ supported on $\Reals$ with asymptotic coding cost $-\log P(B) + \Oh(1)$, that we call bits-back quantization (BBQ).
For a graphical illustration of the procedure, see \Cref{fig:bbq_illustration}.
We assume, that $P$ has invertible CDF $\Phi$ and $B = (u, v)$ is an interval, where $u, v \in \Reals$.
We assume that $P(B) = a / b$ with $a, b \in \Ints^+$ and $a \leq b$.

\par
The problem can be reduced to sampling a uniform target using a uniform proposal, as follows.
Let 
\begin{align}
    \B \defeq \Phi(B) = \{ \Phi(x) \mid x \in B \}. \nonumber
\end{align}
WLOG, we may assume that $\B$ can be decomposed as
\begin{align}
    \B = \kappa + [0, a / b), \quad \kappa \in [0, 1 - a/b]. \nonumber
\end{align}
Since $P(B) = \Phi(v) - \Phi(u)$, the above decomposition ensures that $\abs{\B} = P(B) = a/b$, while the restriction on the range of the offset $\kappa$ ensures that $\B \subseteq [0, 1)$.
Let $X \sim P\vert_{B}$ and $Y \sim \Unif(\kappa + [0, a/b))$.
It is a standard result from inverse transform sampling that
\begin{align}
    \Phi^{-1}(Y) \disteq X, \nonumber
\end{align}
where $\disteq$ denotes equality in distribution.
Therefore, we restrict our attention to encoding $Y$, which can be transformed via $\Phi^{-1}$ to follow the desired truncated distribution.
\par
We begin by describing the special case of the problem when $a = 1$.
In this case, we use DQ directly to encode $Y$.
Recall, that for $V, V' \sim \Unif([-1/2, 1/2))$ we have \cite{ziv1985universal}\cite{zamir2014lattice}:
\begin{align}
    \lfloor c + V \rceil - V \disteq c + V', \nonumber
\end{align}
where we always round towards $\infty$, i.e., ${\lfloor x \rceil \defeq \lfloor x + 1/2 \rfloor}$ for $x \in \Reals$.
Let us assume that Alice and Bob simulate $V$ using their shared source of randomness.
Then, let
\begin{align}
    N &\defeq \left\lfloor b \cdot \kappa + V \right\rceil \nonumber \\
    Y &\defeq \frac{1}{b} \cdot \left(N - V + \frac{1}{2}\right). \nonumber
\end{align}
Then, $Y \sim \Unif(\kappa + [0, 1/b))$ as desired.
Since by assumption $\kappa \in [0, 1 - 1/b)$, we have $N \in [0:b - 1]$, thus Alice can communicate $N$ in $\lfloor \logtwo b \rfloor \leq \logtwo b + 1 = -\logtwo P(B) + 1$ bits.
Further, whenever $Y$ is uniformly distributed over $[0, 1)$, then $V = bY - N$ must be uniformly distributed over $[0, 1)$.

\par
We now consider an extension to the above scheme for general $a \in \Ints^+, a < b$, in which case $Y \sim \Unif(\kappa + [0, a/b))$.
Or proposal is inspired by bits-back coding \cite{bamler2022understanding}.
First, Alice replicates the unit interval $a$ times to obtain $[0, a)$.
She also creates $a$ copies of the offset, $\kappa_i \defeq i + \kappa$ for $i \in [0:a - 1]$.
Consider encoding the sample redundantly $a$ times. If
\begin{align}
    N_i &\defeq \left\lfloor \frac{b}{a} \cdot \kappa_i + V \right\rceil \nonumber \\
    Y_i &\defeq \frac{a}{b}\cdot\left(N_i - V + \frac{1}{2}\right) \nonumber
\end{align}
then
\begin{align}
   \forall i \in [0:a - 1], \quad Y \disteq (Y_i \mod 1). \nonumber
\end{align}
The significance of the above is that Alice has freedom in choosing $\kappa_i$ and she has no a priori preference.
Hence, we suggest that before encoding $N$, she first simulates ${I \sim \Unif([0:a-1])}$ using an invertible sampler by decoding $\logtwo a$ bits from the message she plans to transmit to Bob.
We can use the algorithm in Bamler's Listing 1 \cite{bamler2022understanding} for this.
Then, Alice transmits $N_I$ to Bob.
Note, that
\begin{align}
    \supp\kappa_I = \bigcup_{j = 0}^{a - 1}\left[j, j + 1 - \frac{a}{b} \right], \label{eq:support_of_kappa_I}
\end{align}
where $\supp \kappa_I$ denotes the support of $\kappa_I$.
Hence, similarly to the $a = 1$ case, $N_I$ can take $b$ different values.
Thus, Alice requires $\logtwo b$ bits to encode $N_I$.
While Bob technically receives $\logtwo b$ bits from Alice, he can recover the $\logtwo a$ bits Alice used to simulate $I$, as we show next.
Once Bob decodes $N_I$, he computes $Y_I$, from which he can compute $Y$.
However, he can also use $N_I$ to compute a range in which $\kappa_I$ must fall:
\begin{align}
&\left\lfloor \frac{b}{a} \cdot \kappa_I + V \right\rceil = N_I \nonumber \\
\Leftrightarrow \quad &\frac{a}{b}\cdot\left(N_I - V - \frac{1}{2}\right) \leq \kappa_I < \frac{a}{b}\cdot\left(N_I - V + \frac{1}{2}\right) \nonumber \\
\Leftrightarrow \quad &Y_I - \frac{a}{b} \leq \kappa_I < Y_I \nonumber
\end{align}
Let $\mathcal{I}(N_I, V)$ denote the above interval, and note that its width is $a / b$.
However, Bob has a second piece of information at his disposal, namely that $\kappa_I$ is supported on the set defined in \Cref{eq:support_of_kappa_I}.
Note, that for all $0 \leq j < a - 1$, the distance between $[j, j + 1 - a/b]$ and $[j + 1, j + 2 - a/b]$ is precisely $a/b$, meaning that $\mathcal{I}(N_I, V)$ will intersect exactly one interval in $\supp \kappa_I$, namely $[I, I + 1 - a/b)$.
Two intervals intersect each other if and only if the supremum of both is greater than the infimum of the other, i.e.\
\begin{align}
    &\left[I, I + 1 - \frac{a}{b}\right] \cap \left[Y_I - \frac{a}{b}, Y_I \right) \neq \emptyset \nonumber\\
    \Leftrightarrow\quad & I < Y_I \quad \text{and} \quad Y_I - \frac{a}{b} \leq I + 1 - \frac{a}{b} \nonumber \\
    \Leftrightarrow\quad & I < Y_I \leq I + 1 \nonumber \\
    \Leftrightarrow\quad & I = \left\lceil Y_I\right\rceil - 1. \label{eq:index_computation_rule}
\end{align}
Thus, Bob can recover $I$ using \Cref{eq:index_computation_rule}.
He encodes $I$ back into his remaining message using the same invertible sampler Alice used to decode.
In total Bob receives $\logtwo b$ bits, but recovers $\logtwo a$ bits, thus the communication cost of this procedure is
\begin{align}
    \logtwo b - \logtwo a = -\logtwo\frac{a}{b} = -\log P(B), \nonumber
\end{align}
as required.
\par \textit{Note:} The boundary case $Y_I = I + 1$ occur almost never, so in practice it is also reasonable to use $I = \lfloor Y_I \rfloor$.
\par
Finally, if $Y_I$ is uniform over $[0, a)$, then $V = \frac{b}{a} Y_I - N_I$ must be uniform over $[0, 1)$.

\end{document}